\tikzstyle{stoch}=[circle,thick,draw,minimum size=1.5em,inner sep=0em]
\tikzstyle{max}=[rectangle,thick,draw,minimum size=1.5em,inner sep=0em]
\tikzstyle{tran}=[thick,draw,->,>=stealth,rounded corners]
\tikzstyle{loop left}=[tran, to path={.. controls +(150:.8) 
\tikzstyle{loop right}=[tran, to path={.. controls +(30:.8) 
\tikzstyle{loop above}=[tran, to path={.. controls +(60:.5) 
\tikzstyle{loop below}=[tran, to path={.. controls +(240:.8) 
\newtheorem{theorem}{Theorem}
\newtheorem{claim}{Claim}
\newtheorem{lemma}[theorem]{Lemma}
\newcommand{\tm}{\mathit{time}}
\newcommand{\Nset}{\mathbb{N}}
\newcommand{\Rset}{\mathbb{R}}
\newcommand{\calL}{\mathcal{L}}
\newcommand{\calH}{\mathcal{H}}
\newcommand{\calV}{\mathcal{V}}
\newcommand{\calG}{\mathcal{G}}
\newcommand{\wait}{\mathit{wait}}
\newcommand{\attack}{\mathit{attack}}
\newcommand{\EU}{\mathbb{E}U}
\newcommand{\Val}{\operatorname{Val}}
\newcommand{\RVal}{\operatorname{RVal}}
\newcommand{\Obs}{\Omega}
\newcommand{\Prob}{\operatorname{Prob}}
\newcommand{\Dist}{\mathit{Dist}}
\newcommand{\Attack}{\mathit{Att}}
\newcommand{\hist}{\mathcal{H}}
\newcommand{\Path}{\textit{Path}}
\newcommand{\Protect}{\mathbf{P}}
\newcommand{\ltime}{\ell}
\newcommand{\dhat}[1]{\smash{\hat{#1}}}
\newcommand{\dm}{d_{\max}}
\newcommand{\am}{\alpha_{\max}}
\newcommand{\alg}[1]{Alg.~\ref{#1}}
\newcommand{\fig}[1]{Fig.~\ref{#1}}
\newcommand{\tab}[1]{Tab.~\ref{#1}}
\newcommand{\sect}[1]{Sec.~\ref{#1}}
\newcommand{\Reg}{\mathit{Reg}}
\newcommand\pgrad{p_\text{grad}}
\let\der\pgrad
\newcommand\mem{\operatorname{mem}}
\let\hat\widehat
\DeclareRobustCommand\onedot{\futurelet\@let@token\@onedot}
\def\@onedot{\ifx\@let@token.\else.\null\fi\xspace}
\newcommand\eg{{e.g}\onedot} 
\newcommand\ie{{i.e}\onedot} 
\newcommand\cf{{c.f}\onedot}
\newcommand{\Regstar}{\textsc{Regstar}\xspace}
\newcommand{\baseline}{\textsc{Baseline}\xspace}
\newcommand{\tran}[1]{\stackrel{\raisebox{-.3ex}{\scriptsize$#1$}}{\rightarrow}}
\title{Regstar: Efficient Strategy Synthesis for Adversarial Patrolling Games}
\author[1]{David Kla\v{s}ka}
\author[1]{Anton\'{\i}n Ku\v{c}era}
\author[1]{V\'{\i}t Musil}
\author[1]{Vojt\v{e}ch {\v{R}}eh\'{a}k}
\affil[1]{%
Faculty of Informatics,
Masaryk University,
Brno, Czech Republic}
\begin{document}

\maketitle

\begin{abstract}
We design a new efficient strategy synthesis method applicable to adversarial patrolling problems on graphs with arbitrary-length edges and possibly imperfect intrusion detection.
The core ingredient is an efficient algorithm for computing the value and the gradient of a function assigning to every strategy its ``protection'' achieved.
This allows for designing an efficient strategy improvement algorithm by differentiable programming and optimization techniques.
Our method is the first one applicable to real-world patrolling graphs of reasonable sizes.
It outperforms the state-of-the-art strategy synthesis algorithm by a margin.
\end{abstract}

\section{Introduction}
\label{sec-intro}

\emph{Patrolling games} are a special type of security games where a mobile
Defender moves among vulnerable targets and aims to detect possible
ongoing intrusions initiated by an Attacker. The targets are modelled as
vertices in a directed graph where the edges correspond to admissible
Defender's moves. At any moment, the Attacker may choose some target
$\tau$ and initiate an intrusion (attack) at~$\tau$. 
Completing this intrusion takes $d(\tau)$ time units,
and if the Defender does not visit $\tau$ in time,
he is penalized by utility loss determined by the cost of~$\tau$.

In \emph{adversarial} patrolling games
\citep{VAT:adversarial-patrolling,AKK:multi-robot-perimeter-adversarial,Agmon2009,BGA:large-patrol-AI,Basilico2009,MunozdeCote2013,SLESSLIN2019},
the Attacker knows the Defender's strategy and can even observe the Defender's
moves\footnote{The Defender may choose the next move randomly according to a
distribution specified by its moving strategy. Although the Attacker knows the
Defender's strategy (\ie, the distribution), it \emph{cannot} predict the way
of resolving the randomized choice.}. These assumptions are particularly
appropriate in situations where the actual Attacker's abilities are
\emph{unknown} and the Defender is obliged to guarantee a certain level of
protection even in the worst case. This naturally leads to using
\emph{Stackelberg equilibrium}
\citep{DBLP:conf/ijcai/SinhaFAKT18,yin2010stackelberg} as the underlying
solution concept, where the Defender/Attacker play the roles of the
leader/follower, \ie, the Defender commits to a moving strategy $\gamma$, and
the Attacker follows by selecting an appropriate counter-strategy~$\pi$. The \emph{value} of $\gamma$, denoted by~$\Val(\gamma)$, is the expected Defender's utility guaranteed by $\gamma$ against an arbitrary Attacker's strategy. Intuitively, $\Val(\gamma)$ corresponds to the ``level of protection'' achieved by~$\gamma$.

\begin{figure}[b]
	\centering
	\includegraphics[width=\linewidth]{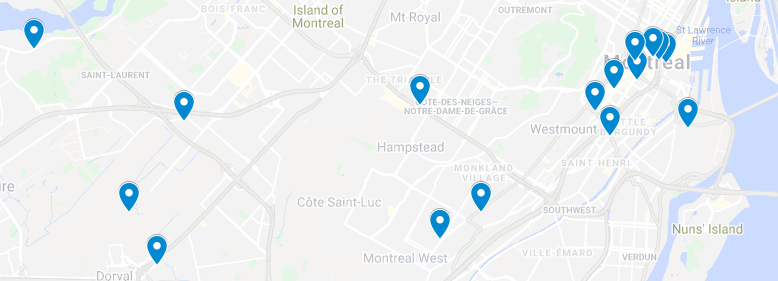}
	\caption{We synthesize an efficient strategy for patrolling the net of Montreal's ATMS.}
\label{F:atms}
\end{figure}

The basic algorithmic problem in patrolling games is to compute a Defender's strategy $\gamma$ such that $\Val(\gamma)$ is as large as possible. Since general history-dependent strategies are not algorithmically workable (see \sect{sec-Defender-strategy}), recent works \citep{KL:patrol-regular,DBLP:conf/atal/KlaskaKLR18} concentrate on computing \emph{regular} strategies where the Defender's decisions depend on finite information about the history of previously visited vertices. As \citet{KL:patrol-regular} observed, regular strategies provide better protection than \emph{memoryless} strategies where the Defender's decision depends only on the currently visited vertex. However, the mentioned algorithms apply only to patrolling graphs where all edges have the \emph{same} length (traversal time). A longer distance between vertices can be modelled only by adding a sequence of auxiliary vertices and edges, quickly pushing the resulting graph's size beyond the edge of feasibility (see \sect{sec-experiments}). Even the currently best algorithm of \citet{DBLP:conf/atal/KlaskaKLR18} fails to solve small real-world patrolling graphs such as a network of selected ATMs in Montreal (\fig{F:atms}).

\paragraph{Our contribution}

\begin{enumerate}
\item We prove that \textbf{regular Defender's strategies} are not only better than memoryless strategies, but they are \textbf{arbitrarily close} to the power of general strategies.
Therefore, the scope of strategy synthesis can be safely restricted to regular strategies.
This resolves the open question of previous works.

\item We design an efficient \Regstar\footnote{REGular STrategy ARchitect.} algorithm for computing regular Defender's strategies in general patrolling graphs with edges of \textbf{arbitrary length}. 
\Regstar applies to scenarios with \textbf{imperfect intrusion detection}, where the probability of discovering an intrusion at a target~$\tau$ by the Defender visiting $\tau$ is not necessarily equal to one.

\item We validate \Regstar experimentally.
We compare \Regstar against the best existing algorithm when applicable, and
we perform a set of \textbf{real-life experiments} demonstrating its strengths and limits.
\end{enumerate}

We depart from the fact that the function $\Val$ assigning the protection value to a given regular Defender's strategy is \emph{differentiable}.
The very heart of our algorithm is a novel, efficient procedure for computing the value and the gradient of $\Val$.
Since the size of the closed-form expression representing $\Val$ is exponential, the task is highly non-trivial.
We apply differentiable programming techniques and design an efficient strategy improvement algorithm for regular strategies based on gradient ascent.
The \Regstar algorithm randomly generates many initial regular strategies, improves them, and returns the best strategy.

The efficiency of \Regstar is evaluated experimentally in \sect{sec-experiments}.
In the first series of experiments, we compare the efficiency of \Regstar against the algorithm of~\citet{DBLP:conf/atal/KlaskaKLR18}.
In the second series, we demonstrate the applicability of \Regstar to a real-world patrolling graph with 18 targets corresponding to selected ATMs in Montreal.
In the last series, we document the power of regular strategies on patrolling graphs modelling buildings with corridors and offices.
Here, the information about the history of visited vertices is crucial for achieving reasonable protection.   

Experiments prove that \Regstar outperforms the method of~\citet{DBLP:conf/atal/KlaskaKLR18} and can solve instances far beyond the reach of this algorithm.
Our approach adopts the \emph{infinite horizon} patrolling game model, and therefore it does not suffer from the scalability issues caused by increasing the time bound in finite-horizon security games (see \sect{sec-related} for more comments).
For practical applications, solving patrolling graphs with about 20--30 targets seems sufficient, as the protection achievable by a single Defender becomes low for a higher number of targets, and the patrolling task needs to be solved by multiple Defenders\footnote{Patrolling by multiple Defenders is studied independently, see, \eg, \citep{Beynier:multiagent-patrol,GEW:multiple-defenders}}.

\section{Related Work}
\label{sec-related}

Patrolling games are a special type of \emph{security games} where game-theoretic concepts are used to determine the optimal use of limited security resources \citep{Tambe:book}.
Security games with static allocation have been studied in, \eg, \citep{JKKOT:massive-security-games,JKKOT:optimal-resource-massive-security-games,PJMOPTWPK:Deployed-ARMOR,TRKOT:IRIS,DBLP:conf/aaai/XuWVT18,DBLP:conf/ijcai/XuJSRDT15,DBLP:conf/aaai/GanAVG17}.
For patrolling games, where the Defender is mobile, most of the existing works assume the Defender is following a \emph{positional} strategy that depends solely on the current position of the Defender \citep{BGA:large-patrol-AI}.
Since positional strategies are weaker than general history-dependent strategies, there were also attempts to utilize the history of the Defender's moves.
This includes the technique of duplicating each node of the graph to distinguish internal states of the Defender (for example, \citet{AKK:multi-robot-perimeter-adversarial} consider a direction of the patrolling robot as a specific state; this is further generalized in \citet{bosansky2012-aaaiss}).
Another concept is higher-order strategies \citep{Basilico2009}, where the Defender takes into account a bounded sequence of previously visited states.
In our work, we use regular strategies \citep{KL:patrol-regular,DBLP:conf/atal/KlaskaKLR18}, where the information about the history of Defender's moves is abstracted into finitely many memory elements. 

The existing strategy synthesis algorithms for patrolling games are based either on (1) mathematical programming with non-linear constraints, or (2) restricting the graph topology to some manageable subclass, or (3) strategy improvement, or (4) reinforcement learning.
The first approach (see, \eg, \citep{BGA:large-patrol-AI,Basilico2009,bosansky2011aamas}) suffers from scalability issues. In \citet{bosansky2011aamas}, the authors consider mobile targets, which forces the strategy to be time-dependent. \citet{BGA:large-patrol-AI} consider higher-order strategies in theory, but they perform experiments with positional strategies only due to computational infeasibility (one can easily construct examples where positional strategies are weaker than regular strategies and the protection gap is up to 100\%. The experiments of \citet{VAT:adversarial-patrolling} are also limited to positional strategies. \citet{SLESSLIN2019} make full use of the history, yet they study only perimeters (\ie, cycles), so their approach does not apply to graphs with arbitrary topologies.

The second approach applies only to selected topologies, such as lines, circles \citep{AKK:multi-robot-perimeter-adversarial,AgmonKK08-2}, or fully connected graphs with unit distance among all vertices \citep{BKR:patrol-Internet}.
The third approach has been applied only to patrolling graphs with edges of unit length.
The algorithm of \citet{KL:patrol-regular} requires a certain level of human assistance because the underlying finite-state automaton gathering the information about the Defender's history must be handcrafted.
\citet{DBLP:conf/atal/KlaskaKLR18} overcome this limitation by designing an automatic strategy synthesis algorithm, and it is the most efficient strategy synthesis procedure existing before our work. There is a high-level similarity to our algorithm because both use a variant of gradient ascent and construct regular strategies. However, the internals of the two algorithms is different. The core of our method is a novel procedure for computing the value and the gradient of the value function. In particular, our algorithm avoids the blowup in the number of states when modeling real-world patrolling scenarios with variable length edges. This allows for processing instances far beyond the reach of the algorithm of \citet{KKLR:patrol-gradient}, as documented experimentally in \sect{sec-Montreal}.

The fourth approach has been successful mainly for games with the finite horizon \citep{WSYWSJF:Patrolling-learning,DBLP:conf/aaai/KarwowskiMZGA19} and suffers from the exponential blowup in the number of finite paths with the increasing time bound.

\section{Patrolling Games}
\label{sec-patrolling}

We recall the standard model of adversarial patrolling games and fix the notation used.

\paragraph{Patrolling graphs}
\label{sec-graphs}

A \emph{patrolling graph} is a tuple $G = (V,T,E,\tm,d,\alpha,\beta)$, where
\begin{itemize}
	\item $V$ is a non-empty set of \emph{vertices};
	\item $T \subseteq V$ is a non-empty set of \emph{targets}; 
	\item $E \subseteq V \times V$ is a set of \emph{edges};
	\item $\tm\colon E\to\Nset$ denotes the time to travel an edge;
	\item $d\colon T\to\Nset$ specifies the time to complete an attack;
	\item $\alpha\colon T\to\Rset_+$ defines the costs of targets;
	\item $\beta\colon T\to(0,1]$ is the probability of a successful intrusion detection.
\end{itemize}
For short, we write $u\to v$ instead of $(u,v)\in E$, and denote $\alpha_{\max}=\max_{\tau\in T} \alpha(\tau)$ and $\dm=\max_{t\in T} d(t)$.  
In the sequel, let $G$ be a fixed patrolling graph.

\subsection{Defender's strategy}
\label{sec-Defender-strategy}

A Defender's strategy is a recipe for selecting the next vertex.  In general,
the Defender may choose the next vertex randomly depending on some information
about the history of previously visited vertices. 

Let $\calH$ be the set of all finite paths in $G$, including the empty path~$\lambda$.
A \emph{Defender's strategy} for $G$ is a function $\gamma\colon\calH
\rightarrow \Dist(V)$ where $\Dist(V)$ is the set of all probability
distributions on~$V$ such that whenever $\gamma(h)(v) > 0$, then either $h = \lambda$ or $u \rightarrow v$ where $u$ is the last vertex of~$h$. Note that $\gamma(\lambda)$ corresponds to the initial distribution on $V$.

Unrestricted Defender's strategies may depend on the whole history of previously visited vertices when
selecting the next vertex, and they may not be finitely representable.

\subsection{Attacker's strategy}

We consider the same patrolling game as in \citet{KKR:patrol-drones}, where the time is spent by moving along the edges.
We also use the same notion of Attacker's strategy, assuming that, in the \emph{worst case}, the Attacker can determine the next edge taken by the Defender \emph{immediately} after the Defender leaves the currently visited vertex.
This means that the Attacker's decision is based not only on the history of visited vertices but also on edge taken next.

The Attacker cannot gain anything by delaying his attack until the Defender arrives
at the next vertex.
Therefore we can assume an attack is initiated at the moment
when the Defender leaves the currently visited vertex.
Furthermore, the Attacker can attack \emph{at most once} during a play\footnote{Even if the Attacker can perform another attack after completing the previous one, the best choice the Defender is to follow an optimal strategy constructed for the single attack scenario.
This is no longer true when the Defender has to spend some time responding to the discovered attack \citep{SLESSLIN2019} or when multiple Attackers can perform several attacks concurrently.}.

An \emph{observation} is a finite sequence $o =
v_1,\ldots, v_n$, $v_n {\rightarrow} v_{n+1}$, where $v_1,\ldots, v_n\in\calH$.
The set of all observations is denoted by $\Obs$. 
An \emph{Attacker's strategy} for a patrolling graph $G$ is a function
$\pi\colon \Obs \rightarrow \{\wait,\attack_\tau:\tau\in T\}$. We require that
if $\pi(v_1,\ldots, v_n,v_n {\rightarrow} u) = \attack_\tau$ for some $\tau \in T$,
then $\pi(v_1,\ldots, v_i,v_i {\rightarrow} v_{i+1}) = \wait$ for all $1\leq
i<n$ ensuring that the Attacker can attack at most once.

\subsection{Evaluating Defender's strategy}

\newcommand\Eval{\operatorname{Eval}}

Let $o = v_1,\ldots, v_n$, $v_n {\rightarrow} v_{n+1}$, $n\ge 1$, be an observation, $\tau\in T$ a target and consider an attack at $\tau$ after observing~$o$.

By $\Path(o,\tau)$ we denote the set of all finite paths $u$ from $v_{n+1}$ to $v_{n+1+k}=\tau$ of the length $k\ge 0$ such that the total time needed to traverse from $v_n$ to $\tau$ along $u$ does not exceed~$d(\tau)$.

For $u\in\Path(o,\tau)$, let $\Eval(\tau\mid u)$ be the value defended at $\tau$ when the Defender discovered the attack in the last vertex of $u$. 
The probability of detecting the attack at $\tau$ after traversing $u$ equals $(1-\beta(\tau))^{\#_\tau(u)-1} \cdot \beta(\tau)$, where $\#_\tau(u)$ stands for the number of visits to $\tau$ along $u$.
Indeed, since the intrusion detection is not perfect, the Defender failed to discover the attack at the first $\#_\tau(u)-1$ trials with the probability $1-\beta(\tau)$ and succeeded at the last one with the probability~$\beta(\tau)$.
Specially, when $\beta=1$ and $\#_\tau(u)=1$, the factor becomes $0^0$, interpreted as~$1$. 
Therefore, as $\alpha(\tau)$ denotes the cost of $\tau$,
\begin{equation}
	\Eval(\tau\mid u)
		= \alpha(\tau) \cdot (1-\beta(\tau))^{\#_\tau(u)-1} \cdot \beta(\tau).
\end{equation}

The \emph{protection} achieved by $\gamma$ against an attack at~$\tau$ initiated after observing~$o$ is defined as 
\begin{equation}
  \Protect^{\gamma}(\tau\mid o)
		= \!\!\sum_{u \in \Path(o,\tau)} \Prob^\gamma(u\mid o) \cdot \Eval(\tau\mid u),
\end{equation}
where $\Prob^\gamma(u \mid o)$ is the probability of performing~$u$ after observing~$o$, \ie,
\begin{equation}
	\Prob^\gamma(u\mid o)
		= \prod_{i=1}^{k} \gamma(v_1,\ldots,v_{n+i})(v_{n+i+1}).
\end{equation}
Similarly, we use $\Prob^\gamma(o)$ to denote the probability that observation~$o$ occurs, \ie,
\begin{equation}
\Prob^\gamma(o)
= \prod_{i=0}^{n} \gamma(v_1,\ldots,v_{i})(v_{i+1}).
\end{equation}
Now let $\pi$ be an Attacker's strategy.
For every $\tau \in T$, let $\Attack(\pi,\tau)$ be the set of all $o\in\Omega$ such that $\pi(o)=\attack_\tau$.
The \emph{expected Attacker's utility} for $\gamma$ and $\pi$ is defined as
\begin{equation}
	\EU_A(\gamma,\pi)
	= \sum_{\tau \in T} \sum_{o\in\Attack(\pi,\tau)}
				\!\!\Prob^{\gamma}(o) \cdot \bigl[\alpha(\tau) - \Protect^{\gamma}(\tau\mid o)\bigr].
\end{equation}
Note that $\EU_A(\gamma,\pi)$ corresponds to the expected amount ``stolen'' by the Attacker.
Consistently with \citet{DBLP:conf/atal/KlaskaKLR18}, the \emph{expected Defender's utility} is defined as 
\begin{equation} \label{eq-EUD}
   \EU_D(\gamma,\pi)
	 	= \alpha_{\max} - \EU_A(\gamma,\pi)\,.
\end{equation}


The Defender/Attacker aims to maximize/minimize the expected protection $\EU_D(\gamma,\pi)$, respectively.
The \emph{value} of a given Defender's strategy $\gamma$  is the expected protection guaranteed by $\gamma$ against an \emph{arbitrary} Attacker's strategy, \ie,
\begin{equation} \label{E:val-def}
	\Val_G(\gamma)  =  \inf_{\pi} \ \EU_D(\gamma,\pi)\,.
\end{equation}
Maximal protection achievable in $G$ is then
\begin{equation}
  \Val_G = \sup_\gamma \Val_G(\gamma)\,.
\end{equation}

\section{The Method}

Our approach consists of three stages.
First, since the value function from \eqref{E:val-def} is not a priori
tractable, we analyze the value of \emph{regular strategies} for $G$.
This yields a closed-form differentiable value function.
Secondly, we design an \emph{efficient algorithm} that computes the value function and its gradient. This is the heart of our contribution.
Finally, we combine these elements into the \Regstar algorithm, which \emph{generates the best regular strategies} via gradient ascent.

\subsection{Regular Defender's strategies}
\label{sec-reg-strategies}

An algorithmically workable subclass of Defender's strategies are
\emph{regular} strategies used by~\citet{KL:patrol-regular,DBLP:conf/atal/KlaskaKLR18}.
In this concept, the information about the history of visited
vertices is abstracted into a \emph{finite} set of memory elements assigned to each vertex.

Formally, we turn vertices $v\in V$ of $G$ into \emph{eligible pairs}
$\dhat{V}=\{(v,m): v\in V,\,1\le m\le \mem(v)\}$
in which the memory sizes $\mem\colon V\to\Nset$ are fixed.
A \emph{regular Defender's strategy} for $G$ is a function $\sigma\colon\dhat{V}\to\Dist(\dhat{V})$
satisfying $\sigma(v,m)(v',m')>0$ only if $v\to v'$.
Intuitively, the Defender traverses the vertices of $G$ updating memory elements and thus ``gathering''
some information about the history of visited vertices. The probability that $v'$ with information represented by $m'$ is visited after the current $v$ with information~$m$ is given by $\sigma(v,m)(v',m')$. 
A regular strategy $\sigma$ is called
\emph{deterministic-update} if $\sigma(v,m)(v',m_1)>0$ and $\sigma(v,m)(v',m_2)>0$ imply $m_1=m_2$.
This means that when the Defender is in $(v,m)$, he may randomize to choose the next vertex $v'$,
but the next memory element is then determined uniquely by $(v,m)$ and~$v'$.

For a regular strategy $\sigma$, we derive an expression $\RVal_G(\sigma)$ which corresponds to the value of $\sigma$
against an Attacker who can observe even the current memory element. This approach is consistent with the worst-case paradigm discussed in \sect{sec-intro},
as it is not clear whether the Attacker is capable of that. We show that the expression $\RVal_G(\sigma)$ is indeed a lower bound on $\Val_G(\sigma)$
(\ie, $\sigma$'s value against an Attacker who \emph{cannot} observe the current memory element, \cf. \eqref{E:val-def})
and under reasonable assumptions, they are equal (Claim~\ref{T:val}).
Since $\RVal_G(\sigma)$ is in a closed form, this makes regular strategies algorithmically workable.
Furthermore, we show that regular strategies can achieve protection \emph{arbitrarily close} to the \emph{optimal} protection achievable by unrestricted strategies. Thus, they offer a convenient trade-off between optimality and tractability.

\begin{theorem} \label{thm-reg}
Let $G$ be a patrolling graph and $\Reg$ the class of all regular Defender's strategies in $G$.
Then
\begin{equation}
	\sup_{\sigma \in \Reg} \Val_G(\sigma) = \Val_G
\end{equation}
\end{theorem}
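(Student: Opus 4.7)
The inequality $\sup_{\sigma\in\Reg}\Val_G(\sigma)\le\Val_G$ is immediate: every regular strategy $\sigma$ gives rise to an unrestricted Defender strategy by marginalising out the memory coordinate, and an Attacker who does not see memory is never stronger than one who does. The content of the theorem is the reverse direction, for which I would proceed by explicit approximation.

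Fix $\varepsilon>0$ and an unrestricted strategy $\gamma$ with $\Val_G(\gamma)\ge\Val_G-\varepsilon/2$. The plan is to approximate $\gamma$ by a regular strategy $\sigma_N$ whose memory records the sliding window of the last $N-1$ vertices visited, for $N$ chosen large enough. Concretely, I would attach to each vertex $v$ a memory element for every path of length $N-1$ in $G$ ending at $v$; writing $h\cdot v$ for such a window, I let $\sigma_N$ choose the next vertex $v'$ (and update the memory by shifting the window) with probability equal to the $\gamma$-conditional probability of picking $v'$ given that the last $N$ vertices of the play are exactly $h\cdot v$, i.e., the $\gamma$-weighted average of $\gamma(h''\cdot h\cdot v)(v')$ over admissible prefixes $h''$. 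By construction $\sigma_N$ is a deterministic-update regular strategy.

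Two structural facts drive the approximation. First, $\Protect^\gamma(\tau\mid o)$ depends on $\gamma$ only through its decisions along paths of length at most $\dm$ continuing from~$o$; hence once $N\ge\dm$, the defence term $\alpha(\tau)-\Protect^{\sigma_N}(\tau\mid o)$ matches $\alpha(\tau)-\Protect^\gamma(\tau\mid o)$ in the relevant sense, after averaging over the sliding-window state at time $|o|$ according to its $\gamma$-marginal distribution. Second, by construction the $\sigma_N$- and $\gamma$-probabilities of every observation of length at most $N$ coincide. Together, for any Attacker strategy $\pi$ that commits to attack at an observation of length at most $N-\dm$, the expected utility $\EU_A(\sigma_N,\pi)$ equals $\EU_A(\gamma,\pi)$.

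The main obstacle is the infinite-horizon nature of the Attacker: a clever Attacker could in principle delay the attack arbitrarily long to exploit a long-range mismatch between $\sigma_N$ and $\gamma$. I would dispose of this via truncation. Since the events ``$\pi$ attacks at observation $o$'' are pairwise disjoint across $o$, their total probability is at most $1$, so the contribution to $\EU_A$ from attacks fired at observations of length greater than $H$ is at most $\am$ times the probability that $\pi$ has not attacked by time $H$. Pick $H$ large enough that, for $\varepsilon/8$-optimal Attackers against both $\gamma$ and $\sigma_N$, this tail is below $\varepsilon/8$; then for $N\ge H+\dm$ the pathwise matching on observations of length at most $N-\dm$ combined with a symmetric inequality in both directions yields $|\sup_\pi\EU_A(\sigma_N,\pi)-\sup_\pi\EU_A(\gamma,\pi)|\le\varepsilon/2$, and therefore $\Val_G(\sigma_N)\ge\Val_G(\gamma)-\varepsilon/2\ge\Val_G-\varepsilon$, as required. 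The delicate step is obtaining the $H$-truncation simultaneously for $\gamma$ and for the family $\{\sigma_N\}$; I would handle this by first fixing $H$ for $\gamma$ and then exploiting that the $\sigma_N$-distribution on the first $N$ moves agrees exactly with the $\gamma$-distribution, so the $\sigma_N$-tail at level $H$ is controlled by the $\gamma$-tail for every $N\ge H$.
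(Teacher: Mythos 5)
There is a genuine gap, and it sits exactly where you flag the ``delicate step'': the truncation argument cannot work, because the quantity you need to make small --- the probability that a near-optimal Attacker has not attacked by time $H$ --- is not small for any $H$. The Attacker attacks \emph{at most once} and loses nothing by waiting: against an optimal $\gamma$ every positive-probability subgame is again optimal (this is Lemma~\ref{lem-aux} in the paper), so the Attacker is indifferent about \emph{when} to fire, and there are $\varepsilon$-optimal Attackers (against $\gamma$, and a fortiori against $\sigma_N$) that with probability one attack only at observations of length $>H$. Your bound ``contribution of late attacks $\le \am\cdot\Prob[\text{no attack by }H]$'' is then $\le \am\cdot 1$, which is vacuous. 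Since $\Val_G(\sigma_N)$ is an infimum over \emph{all} Attacker strategies, it is governed by the worst observation of \emph{any} length, and agreement of $\sigma_N$ with $\gamma$ on the first $N$ steps gives no control over $\Protect^{\sigma_N}(\tau\mid o)$ for long $o$. This is compounded by a second problem with the sliding window itself: for $|o|>N$ the continuation of $\sigma_N$ is a \emph{step-by-step} average of $\gamma$'s transition probabilities over prefixes consistent with the current window, and a product of averages is not an average of products; so the long-run behaviour of $\sigma_N$ need not coincide with (or even be a mixture of) the continuations $\gamma[o']$, and nothing prevents some late window from yielding protection far below $\Val_G$.

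The paper's proof is built precisely to avoid this. Instead of a sliding window, it partitions histories into finitely many $\delta$-similarity classes (two histories being $\delta$-similar when they end in the same vertex and induce detection probabilities over the next $\dm$ time units differing by at most $\delta$), and it \emph{folds} $\gamma$ onto a concrete shorter representative history whenever a $\delta$-similar prefix appears. Consequently, at every moment the constructed strategy behaves exactly like $\gamma[o']$ for some specific positive-probability history $o'$, up to at most one $\delta$-bounded ``cut'' inside any attack window of length $\dm$. Lemma~\ref{lem-aux} then supplies the uniform-in-time bound $\EU_A(\gamma[o'],\pi_\tau)\le\am-\Val_G$ for \emph{every} such $o'$, which is what controls arbitrarily late attacks. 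Your proposal has no analogue of this uniform bound; to repair it you would need either such a subgame-perfection argument applied to every reachable memory state, or a reason why window-averaging preserves near-optimality of all continuations --- neither of which the truncation step provides.
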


\begin{figure}[tb]
\centering
\begin{tikzpicture}[scale=.7, every node/.style={scale=.7}, x=1cm, y=1cm]
\node [circle,fill=black,inner sep=1.5pt,label=$\lambda$] (eps) at (0,0) {}; 
\draw[thick] (eps) -- (-1.7,-4);
\draw[thick] (eps) -- (1.5,-4);
\node [circle,fill=black,inner sep=1.5pt,label=above right:$\bar{h}$,label=below left:$T_{\bar{h}}$] (barh) at (-.1,-1.5) {}; 
\draw[thick] (barh) -- +(-.3,-1) -- +(.3,-1) -- (barh);
\node [circle,fill=black,inner sep=1.5pt,label=above right:$hv$,label=below left:$T_{hv}$] (h) at (.11,-3.3) {}; 
\draw[thick] (h) -- +(-.3,-1) -- +(.3,-1) -- (h);
\draw[thick,dotted,->] (eps) -- (barh);
\draw[thick,dotted,->] (barh) -- (h);
\node [draw=none,label={optimal strategy $\gamma$}] (a) at (0,-5.2) {};
\begin{scope}[xshift=4.5cm]
\node [circle,fill=black,inner sep=1.5pt,label=$\lambda$] (eps) at (0,0) {}; 
\draw[thick] (eps) -- (-1.7,-4);
\draw[thick] (eps) -- (1.7,-4);
\node [circle,fill=black,inner sep=1.5pt,label=above right:$\bar{h}$,label=below left:$T_{\bar{h}}$] (barh) at (-.1,-1.5) {}; 
\draw[thick] (barh) -- +(-.3,-1) -- +(.3,-1) -- (barh);
\node [circle,fill=black,inner sep=1.5pt] (h) at (.11,-3.3) {}; 
\draw[thick,dotted,->] (eps) -- (barh);
\draw[thick,dotted,->] (barh) -- (h);
\draw[thick,dashed,rounded corners,->] (h) -- +(.4,0) |- (barh);
\node [draw=none,label={strategy $\sigma_\delta$}] (a) at (0,-5.2) {};
\end{scope}
\end{tikzpicture}
\caption{Folding $\gamma$ into $\sigma_\delta$.}
\label{fig-folding}
\end{figure}
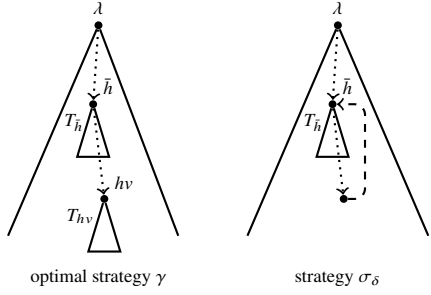

\begin{proof}[Proof (sketch)] 
First, we fix an \emph{optimal} Defender's strategy $\gamma : \calH \rightarrow \Dist(V)$  satisfying $\Val_G(\gamma) = \Val_G$ (the existence of $\gamma$ has been proven in \citep{BHKRA:Patrol-strategy-arxiv}). 

To every history $h$, we associate a finite tree $T_h$ of depth $\dm$ such that:
\begin{itemize}
\item The set of nodes contains all histories $h'$ such that $hh'$ is also a history and the length of $h'$ is bounded by $\dm$;
\item The root of $T_h$ is the empty history $\lambda$;
\item $h' \tran{x} h'v$ is an edge of $T_h$ iff the distribution $\gamma(hh')$ selects the vertex $v$ with probability~$x>0$. 
\end{itemize}
We say that histories $h,\bar{h}$ are \emph{$\delta$-similar} for a given $\delta > 0$ iff $h,\bar{h}$ end in the same vertex and the trees $T_h$ and $T_{\bar{h}}$ are the same up to $\delta$-bounded differences in edge probabilities. Observe that one can construct a fixed sequence of finite trees $T_1,\ldots,T_n$ such that \emph{every} $T_h$ is \mbox{$\delta$-similar} to some $T_i$, where the $n$ depends just on $G$ and $\delta$. Furthermore, we say that a history $h$ \emph{has $\delta$-similar prefix $\bar{h}$} iff $h = \bar{h}w$ where $h,\bar{h}$ are $\delta$-similar and the length of $w$ is larger than $\dm$. 

Let $H_\delta$ be the set of all histories $h$ such that $\Prob^\gamma(h)>0$ and no prefix of $h$ (including $h$ itself) has a $\delta$-similar prefix. Observe that the maximal length of such a history is bounded by 
$n\cdot(\dm {+}1)$ where the $n$ is defined above, and hence $H_\delta$ is a finite set. Let $\mem$ be a function assigning  $|H_\delta|$ memory elements to every vertex. To simplify our notation, we identify memory elements with the elements of $H_\delta$. Now consider the regular strategy $\sigma_\delta$ defined as follows: for every eligible pair $(v,h)$, we have that  \mbox{$\sigma_\delta(v,h)(v',h') = x$} iff $\gamma(\hat{h})(v') = x$ and $h' = \hat{h}$, where 
\begin{equation*}
	\hat{h} =
		\begin{cases}
			\bar{h} & \mbox{if $hv$ has a $\delta$-similar prefix $\bar{h}$,} \\
			hv      & \mbox{otherwise}.
		\end{cases}
\end{equation*}
Intuitively, $\sigma_\delta$ is obtained by ``folding'' the optimal
strategy~$\gamma$ after encountering a history $hv$ with a $\delta$-similar
prefix $\bar{h}$, see \fig{fig-folding}.

The proof is completed by showing that $\lim_{\delta \rightarrow 0^+}
\Val_{G}(\sigma_\delta) = \Val_G$. This is intuitively plausible, because
$\sigma_\delta$ becomes more similar to $\gamma$ for smaller $\delta$. However,
the argument also depends on the subgame-perfect property of optimal
strategies. 

Since $\sigma_\delta$ is a deterministic-update regular strategy,
Theorem~\ref{thm-reg} holds even for the subclass of deterministic-update
regular strategies.
\end{proof}

Let us fix a regular strategy $\sigma$. For convenience, we write
$\sigma(\dhat u,\dhat v)$ for $\sigma(\dhat u)(\dhat v)$.
The set of \emph{eligible edges} $\dhat E$ consists of all edges
$e\in\dhat V\times\dhat V$ such that $\sigma(e)>0$. These are
exactly the edges actually used by the Defender.
A finite sequence of eligible pairs
$u=(v_1,m_1),\dots,(v_n,m_n)$ is called an \emph{eligible path} if
$v_1,\dots,v_n$ is a path in $G$. The probability of executing $u$
is
\begin{equation}
	\Prob^\sigma(u) = \prod_{i=1}^{n-1} \sigma\bigl((v_i,m_i),(v_{i+1},m_{i+1})\bigr)
\end{equation}
For $e=((v,m),(v_1,m_1))\in\dhat E$ and $\tau\in T$, let $\Path(e,\tau)$ denote
the set of all eligible paths $(v_1,m_1),\dots,(v_k,m_k)$ such that
$v_k=\tau$, $k\ge 1$ and the traversal time of the path $v,v_1,\dots,v_k$ is
at most~$d(\tau)$.

We can now express the value of a regular strategy $\sigma$
as follows:
\begin{equation} \label{E:rval}
	\RVal_G(\sigma)
		= \alpha_{\max} -
			\max_{e\in\hat E,\,\tau\in T}\bigl\{\alpha(\tau) - \Protect^\sigma(e,\tau)\bigr\},
\end{equation}
where
\begin{equation} \label{E:P-sigma}
\Protect^\sigma(e,\tau)
= \sum_{u\in\Path(e,\tau)} \Prob^\sigma(u)\cdot \Eval(\tau\mid u).
\end{equation}

\begin{claim} \label{T:val}
Let $\sigma$ be a regular strategy.
Then
\begin{equation} \label{E:val-sigma}
\Val_G(\sigma) \geq \RVal_G(\sigma)
\end{equation}
Moreover, if the graph $(\dhat V,\dhat E)$ is strongly connected
and $\sigma$ is deterministic-update, then the above holds with equality.

\end{claim}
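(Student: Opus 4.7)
The plan is to view $\RVal_G(\sigma)$ as the value of $\sigma$ played against a stronger ``memory-observing'' Attacker. For the inequality, fix any (ordinary) Attacker $\pi$ and let $\gamma_\sigma$ be the history-based Defender's strategy obtained from~$\sigma$ by marginalizing over memory. Unfolding the definitions of $\Prob^{\gamma_\sigma}$ and $\Protect^{\gamma_\sigma}$ at an observation $o = v_1,\dots,v_n,v_n{\to}v_{n+1}$, the quantity $\Protect^{\gamma_\sigma}(\tau\mid o)$ equals a convex combination of $\Protect^\sigma(e,\tau)$ over the eligible edges $e$ whose projection to $V\times V$ coincides with the last step of~$o$, with weights given by the posterior distribution of the memory pair $(m_n,m_{n+1})$ conditional on~$o$. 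Consequently
\begin{equation*}
\alpha(\tau)-\Protect^{\gamma_\sigma}(\tau\mid o) \leq \max_{e\in\dhat E,\,\tau'\in T}\{\alpha(\tau')-\Protect^\sigma(e,\tau')\} =: M.
\end{equation*}
Because the Attacker attacks at most once, $\sum_{\tau,\,o\in\Attack(\pi,\tau)}\Prob^{\gamma_\sigma}(o)\leq 1$, and summing the inequalities yields $\EU_A(\gamma_\sigma,\pi)\leq M$, i.e.\ $\EU_D(\gamma_\sigma,\pi)\geq\alpha_{\max}-M = \RVal_G(\sigma)$. Taking the infimum over $\pi$ proves the first statement.

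For the equality, assume $(\dhat V,\dhat E)$ is strongly connected and $\sigma$ is deterministic-update. Then the memory sequence is a deterministic function of the vertex history, so a vertex-only observer can identify the current eligible pair at every step. Pick $(e^*,\tau^*)$ attaining the maximum $M$ and define $\pi^*$ to play $\wait$ until the Defender is about to traverse~$e^*$ and then to play $\attack_{\tau^*}$. Finite strong connectivity of $(\dhat V,\dhat E)$ together with $\sigma(e^*)>0$ makes the Markov chain induced by $\sigma$ traverse~$e^*$ almost surely, so $\sum_{o\in\Attack(\pi^*,\tau^*)}\Prob^{\gamma_\sigma}(o)=1$. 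Deterministic updates force the realized eligible edge at each such $o$ to be exactly~$e^*$, whence $\Protect^{\gamma_\sigma}(\tau^*\mid o)=\Protect^\sigma(e^*,\tau^*)$. Thus $\EU_A(\gamma_\sigma,\pi^*) = M$, giving $\Val_G(\sigma) \leq \alpha_{\max}-M = \RVal_G(\sigma)$, and combined with the first part the claim follows.

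The main obstacle is the convex-combination identity used at the start, which in the general (not deterministic-update) case requires a careful induction on path length tracking the joint distribution of memory consistent with a given observation~$o$ and aligning it with the eligible-path probabilities defining $\Protect^\sigma$. Once this bookkeeping is in place, the two halves of the proof reduce to ``a convex combination is bounded by its maximum entry'' and to a standard irreducibility argument on the finite Markov chain $(\dhat V,\dhat E)$, respectively.
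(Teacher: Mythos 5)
Your argument is correct and is essentially the one the paper's two-line proof sketch defers to via its citations: interpret $\RVal_G(\sigma)$ as the value against an Attacker who also observes the memory element, get the inequality because the vertex-only Attacker's conditional payoff at each observation is a convex combination of the quantities $\alpha(\tau)-\Protect^\sigma(e,\tau)$ and hence bounded by their maximum $M$, and get equality because deterministic updates let the Attacker reconstruct the memory from the observation while strong connectedness guarantees the worst edge is traversed almost surely. The only points you leave implicit are that $M\ge 0$ (needed to conclude $\EU_A\le M$ when the total attack probability is below one) and that the initial memory element is determined by the initial vertex (needed for the memory sequence to be a function of the vertex history); both hold in the paper's intended setting, so this is not a real gap.
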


\begin{proof}[Proof (sketch)]
It is easy to observe that the edge lengths, the target cost $\alpha(\tau)$ and
the detection probability $\beta(\tau)$ are correctly accounted for in the
definition of $\Protect^\sigma(e,\tau)$.  The rest of the argument is the same
as in~\citep{KL:patrol-regular,DBLP:conf/atal/KlaskaKLR18}.
\end{proof}

A regular strategy $\sigma$ depends only on reasonably many variables $\sigma(e)$, $e\in\dhat E$.
Hence we identify $\sigma$ as an element of $\Rset^{|\dhat E|}$.
The function $\RVal_G\colon\Rset^{|\dhat E|}\to\Rset$ of variable $\sigma$ is differentiable up to the set where the points of maxima in \eqref{E:val-sigma} are not unique.

Claim~\ref{T:val} equips us with a closed-form formula for strategy evaluation. Hence, it enables us to apply methods from differentiable programming to obtain the value of a strategy and its sensitivity to the change of the input.

\subsection{Strategy evaluation}

We describe our algorithm that evaluates $\RVal_G$ and its gradient at
a given point $\sigma$.  According to \eqref{E:val-sigma}, we first evaluate
all the protection values $\Protect^\sigma(e,\tau)$ and their gradients
$\nabla\Protect^\sigma(e,\tau)$.  The value $\RVal_G(\sigma)$ and its gradient
should then be simply the smallest of all values and its gradient.  In
practice, we replace this minimum with its ``soft'' version. The details are
addressed in \sect{S:optim}.

\paragraph{Computing $\Protect^\sigma$ and $\nabla\Protect^\sigma$}
Given $e\in\dhat E$ and $\tau\in T$, 
a naive approach based on \emph{explicitly constructing}
$\Protect^\sigma(e,\tau)$
is inevitably inefficient, because the set $\Path(e,\tau)$ may contain
exponentially many different paths.
We overcome this problem by performing a search on the graph
during which (sub)paths with the same endpoints and the same traversal time
are ``aggregated'' into a single term, resulting in a more compact
representation of $\Protect^\sigma(e,\tau)$ and $\nabla\Protect^\sigma(e,\tau)$.
The search is guided by a min-heap $\calH$, similarly as in Dijkstra's shortest path algorithm.
However, unlike in Dijkstra's, where the search is initiated from each vertex at most once,
we must consider \emph{all} paths from $e$ to $\tau$ whose traversal time does not exceed $d(\tau)$,
and we must keep track of their probability and the corresponding gradient.
To that purpose, each item of $\calH$ corresponds to a certain \emph{set} of paths.
Moreover, instead of initiating the search at $e$, we initiate it at $\tau$
and search the graph backwards. This trick allows us to compute $\Protect^\sigma(e,\tau)$
for a given $\tau\in T$ and \emph{all} $e\in\dhat E$ at once, thereby saving
a factor of $|\dhat E|$ in the resulting time complexity.

In particular, for any $(v,m)\in\dhat V$ and $t\leq d(\tau)$, let
$\calL_{v,m,t}$ denote the set of all paths which start at $(v,m)$,
end at $(\tau,\cdot)$ and have traversal time $t$. Further, for any
$\calL\subseteq\calL_{v,m,t}$, let
\begin{equation}
\Protect^{\sigma}(\calL)
= \sum_{u \in \calL} \Prob^\sigma(u) \cdot \Eval(\tau\mid u)
\end{equation}
As a result of the search, each $\calL_{v,m,t}$ is partitioned into pairwise
disjoint sets $\calL_1,\ldots,\calL_k$ in such a way that the sets $\calL_i$
are in one-to-one correspondence with the items of $\calH$.
In \alg{alg-dynamic}, each $\calL_i$ is represented
by a tuple $(v,m,t,p,\pgrad)$ where $p$ and
$\pgrad$ correspond to the value and the gradient of
$\Protect^{\sigma}(\calL_i)$ at $\sigma$, respectively.
Then, writing $e=((v',m'),(v,m))$, the value of $\Protect^{\sigma}(e,\tau)$ (cf. \eqref{E:P-sigma})
can be computed as the sum of $\Protect^{\sigma}(\calL_{v,m,t})$ over all $t\leq d(\tau)-\tm(v',v)$
where $\Protect^{\sigma}(\calL_{v,m,t})$ is computed as the sum of $\Protect^{\sigma}(\calL_i)$
over the sets $\calL_i$ that form the partition of $\calL_{v,m,t}$.
The gradient is computed analogously.
In \alg{alg-dynamic}, we also use two auxiliary arrays $\calV$ and $\calG$ for
storing the value and the gradient of $\Protect^{\sigma}(\calL_{v,m,t})$
for all $(v,m)\in\dhat V$ and the currently examined traversal time $t$.

The body of the main loop (lines~\ref{line-initH}--\ref{line-end-z}) is
executed for every $\tau\in T$ and computes $\Protect^\sigma(e,\tau)$ and
$\nabla\Protect^\sigma(e,\tau)$ for every $e\in\dhat E$.
The correctness of the algorithm follows from the fact that
after executing line~\ref{line-set-l}, for every $(v,m)\in\dhat V$ we have that
\newcommand\pe{\;\texttt{+=}\;}
\newcommand\te{\;\texttt{*=}\;}
\newcommand\ee{\;\texttt{=}\;}
\SetAlFnt{\small}
\begin{algorithm}[h!]
	\SetAlgoLined
	\DontPrintSemicolon
	\SetKwInOut{Parameter}{parameter}\SetKwInOut{Input}{input}\SetKwInOut{Output}{output}
	\SetKwData{C}{C}\SetKwData{D}{D}\SetKwData{MX}{M}\SetKwData{f}{f}
	\SetKw{Or}{or}
	\SetKw{And}{and}
	\SetKw{Not}{not}
	\SetKwData{Array}{array of}
	\SetKwData{Rat}{Rat}
	\SetKwData{Der}{Der}
	\SetKwData{MinHeap}{min-heap of Paths}
	\SetKwData{Index}{indexed by}
	\SetKwProg{Macro}{macro}{:}{}
	\Input{A patrolling graph $G$, a regular strategy $\sigma$}
	\Output{The sets $\{\Protect^\sigma(e,\tau)\}$ and $\{\nabla\Protect^\sigma(e,\tau)\}$}
	\BlankLine
	\makebox[2em][l]{$\Protect^\sigma$} : \texttt{array}  \makebox[0em][l]{ indexed by edges $\dhat E$ and targets $T$} \;
	\makebox[2em][l]{$\nabla\Protect^\sigma$} : \texttt{array} \makebox[0em][l]{ indexed by edges $\dhat E$ and targets $T$} \;
	\makebox[1em][l]{$\calV$} : \texttt{array} \mbox{ indexed by eligible pairs $\dhat V$} \;
	\makebox[1em][l]{$\calG$} : \texttt{array} \mbox{ indexed by eligible pairs $\dhat V$} \;
	\makebox[1em][l]{$\calH$} : \texttt{min-heap} \mbox{ of tuples $(v,m,t,p,\pgrad)$ sorted by $t$}    \;
    \BlankLine
    set all elements of $\Protect^\sigma$ to $0$ and $\nabla\Protect^\sigma$ to $\vec 0$\;
	\ForEach{$\tau \in T$}{
		set $\calH$ to empty heap\; \label{line-initH}
		\ForEach{$m$ such that $(\tau,m)\in\dhat V$}{\label{initH-start}
		   $\calH.\mathit{insert}(\tau,m,0,\alpha(\tau)\beta(\tau),\vec{0})$\;
		}\label{initH-end}
	    \While{\Not $\calH.\mathit{empty}$}{\label{line-loop-H-start}
	       set all elements of $\calV$ to 0 and $\calG$ to $\vec{0}$ \;\label{line-AB-init}
	       $\ltime \ee \calH.\mathit{peek}.t$\;\label{line-set-l}
	       \Repeat{$\calH.\mathit{empty}$ \Or $\calH.\mathit{peek}.t > \ltime$}{\label{line-AB-start}
	       	   $(v,m,t,p,\der) \ee \calH.\mathit{pop}$ \;
	       	   $\calV(v,m) \pe p$ \;
	       	   $\calG(v,m) \pe \der$ \;
	       }\label{line-AB-end}
           \ForEach{$(v,m)$ such that $\calV(v,m) > 0$\label{line-vm-start}}{
               \ForEach{$e \ee ((v',m'),(v,m))\in\dhat E$}{
							 		$t \ee \tm(v',v)$\;
               	  \uIf{$\ltime + t \leq d(\tau)$}{
               	  	 $\Protect^\sigma(e,\tau) \pe \calV(v,m)$\;\label{line-A-inc}
               	  	 $\nabla\Protect^\sigma(e,\tau) \pe \calG(v,m)$\;\label{line-B-inc}
               	     $p \ee \calV(v,m) \cdot \sigma(e)$ \;\label{line-p-start}
               	     \lIf{$v'=\tau$}{
               	     	$p \te 1-\beta(\tau)$
                     }\label{line-p-end}
               	     \ForEach{$e' \in\dhat E$\label{line-der-init}}{
               	     	 $\der(e') \ee \sigma(e)\cdot\calG(v,m)(e')$\; \label{line-cder-start}
                         \lIf{$e' = e$}{
                        	$\der(e') \pe \calV(v,m)$
                        }
               	         \lIf{$v' = \tau$}{
														$\der(e') \te 1-\beta(\tau)$
                         }\label{line-cder-end}                   	
                     }\label{line-der-end}	
               	     $\calH.\mathit{insert}(v',m',\ltime+t,p,\der)$\;\label{line-H-insert}
               	  }
               }
           } \label{line-vm-end}		
        } \label{line-end-z}\label{line-loop-H-end}	
	}
	\Return $\Protect^\sigma$, $\nabla\Protect^\sigma$
	\caption{Compute $\Protect^\sigma$ and $\nabla\Protect^\sigma$}
	\label{alg-dynamic}
\end{algorithm}
\begin{equation} \label{E:p1}
	\Protect^{\sigma}(\calL_{v,m,\ell})
		= \sum_{(v,m,\ell,p,\pgrad) \in \calH} p
\end{equation}
and
\begin{equation} \label{E:p2}
	\nabla\Protect^{\sigma}(\calL_{v,m,\ell})
		= \sum_{(v,m,\ell,p,\pgrad) \in \calH} \pgrad,
\end{equation}
which can be proved by induction on the number of iterations
of the loop at lines~\ref{line-loop-H-start}--\ref{line-loop-H-end}
(the value of $\ell$ assigned at line~\ref{line-set-l}
always increases between successive iterations).

\paragraph{Complexity analysis}
Let $\Lambda$ be the total number of pairwise different $t$'s for which there exists $u \in \Path(e,\tau)$ with traversal time equal to $t$.
Note that there are at most $|\dhat E| \cdot \Lambda$ items in $\calH$, so each heap operation takes time $\mathcal{O}(\log(|\dhat E| \cdot \Lambda))$.
An analysis of the main loop (lines~\ref{line-initH}--\ref{line-loop-H-end}) reveals that the time complexity of \alg{alg-dynamic} is
\begin{equation}
	\mathcal{O}\bigl(|T|\cdot |\dhat E|^2 \cdot \Lambda \cdot (|\dhat E| + \log(|\dhat E| \cdot \Lambda)\bigr)\,.
\end{equation}
The size of $\Lambda$ plays a crucial role.
It stays reasonably small even if $G$ contains ``long'' edges.
In the worst case, $\Lambda$ can be equal to $d_{\max}$, but this is rarely the case in practice.
Clearly,  the traversal time of every $u \in \Path(e,\tau)$ is at least $t_{\min}=\min_{v\to v'} \tm(v,v')$; and for many other $t$'s between $1$ and $d_{\max}$, there may exist no $u \in \Path(e,\tau)$ with traversal time equal to $t$.
This also explains why applying the algorithm of \citet{DBLP:conf/atal/KlaskaKLR18} to the modified graph obtained from $G$ by splitting the long edges into sequences of unit-length edges is far less efficient than applying our algorithm directly to $G$. Such modification increases the number of vertices very quickly, even if $\Lambda$ is small.
This is confirmed experimentally in \sect{sec-experiments}.

\subsection{\Regstar\ algorithm}
\label{S:optim}

Having a fast, efficient and differentiable algorithm for strategy evaluation
enables us to apply gradient ascent methods.  For a given patrolling graph $G$
and fixed memory sizes $\mem$, we start with strategy $\sigma$ having its
values assigned randomly. Then, in an optimization loop,
we examine $\RVal_G(\sigma)$ and modify $\sigma$ in the direction of its gradient
until no gain is achieved (see \alg{alg-grad}).

The \Regstar algorithm constructs a Defender's strategy by running \alg{alg-grad}
repeatedly for a given number of random $\sigma$ and selecting the best outcome. This results in high-quality strategies, as experimental results confirm.

\paragraph{Normalization}

By definition, a regular strategy $\sigma$ is a bunch of probability
distributions.  A modification of $\sigma$ by an update vector $\xi$ can (and
does) violate this property. A workable solution then requires the use of a
normalization.  Our normalization procedure $N_1(\sigma)$ crops all
elements of $\sigma$ into the interval $[0,1]$ and then returns $\sigma(\dhat
v)/|\sigma(\dhat v)|$ for every $\hat v\in \dhat V$, where $|\sigma(\dhat v)| =
\sum_{(\dhat v,v)\in\dhat E} \sigma(\dhat v, v)$.  The function $\sigma\mapsto
N_1(\sigma)$ is again differentiable at almost every point $\sigma$.
Therefore, the composition of normalization and evaluation results in a
differentiable algorithm whose gradient is, by the chain rule, the composition
$\nabla\RVal_G(N_1(\sigma))\cdot\nabla N_1(\sigma)$.

Prior works \citep{KL:patrol-regular,DBLP:conf/atal/KlaskaKLR18} omitted this step assuming
that the parameter space is the set of normalized strategies.  They
implicitly modify their gradients in order that the update results in a
normalized strategy. This approach can be modelled in our setting by
considering different ``pivoted'' normalization, denoted by
$N_p(\sigma)$. It crops the values as well and, for $\dhat v\in\dhat V$,
$N_p(\sigma)(\dhat v,v)$ equals to $\sigma(\dhat v,v)$ for all $(\dhat v,v)\in\dhat
E$ except one ``pivot'', say $v_p$, for which $\sigma(\dhat v,v_p) =
1-\sum_{(\dhat v,v)\in\dhat E,v\ne v_p} \sigma(\dhat v,v)$.

Pivoting normalization $N_p$ yields very sparse gradients $\nabla N_p$ in oppose to $\nabla N_1$ and filters the signal propagation back to~$\sigma$.
This potentially slows down the optimization, as demonstrated in our experiments (see the supplementary material). 

\SetAlFnt{\small}
\begin{algorithm}[tb]
	\SetAlgoLined
	\DontPrintSemicolon
	\SetKwInOut{Parameter}{parameter}
	\SetKwInOut{Input}{input}
    \SetKwInOut{Output}{output}
	\SetKw{Or}{or}
	\SetKw{And}{and}
	\SetKwFunction{normalize}{Normalize}
	\SetKwFunction{step}{Step}
	\Input{A patrolling graph $G$, a regular strategy $\sigma$}
	\Output{A regular strategy $\sigma'$}
	\BlankLine
    \Repeat{$\RVal_G(\sigma')-p\le \text{threshold}$}{
    	$(\sigma, \nabla\sigma) = \normalize(\sigma)$ \;
			$(p, \nabla p) = \RVal_G(\sigma)$ \;
	    $\sigma' = \step(\sigma, \nabla p\cdot\nabla\sigma)$ \; \label{line-step}
    } 
		\Return $\sigma'$ \;
	\caption{Strategy optimization.}
	\label{alg-grad}
\end{algorithm}

\paragraph{Minima softening}

So far, we supposed that $\RVal_G(\sigma)$ returns a value
$\Protect^\sigma(e,\tau)$ that attains the minimal value. Therefore, only the
top candidate is taken into account in the optimization. In contrast, one can
consider more competitors $\Protect^\sigma(e,\tau)$ that are close to the
minima and optimize for them simultaneously. We implement the very same
``softening'' method as in the baseline algorithm
\citep{DBLP:conf/atal/KlaskaKLR18}.

\paragraph{Strategy update}

In each optimization step, we update the strategy $\sigma$ by
a proportion of the proper gradient $\xi=\nabla\RVal_G\cdot\nabla N_1$.
We use the same scheduling as proposed by \citet{DBLP:conf/atal/KlaskaKLR18},
where a variant of an update $\sigma'=\sigma+(1-\delta)^k\xi$ is being
used.

\section{Experiments}
\label{sec-experiments}

In many natural patrolling scenarios, the targets are distinguished geographic locations (banks, patrol stations, ATMs, tourist attractions, etc.). The connecting edges model the admissible moves of patrolling units (drones, police cars, etc.).
Such graphs naturally contain edges of varying traversal time.

All the existing strategy improvement algorithms are designed for patrolling graphs with edges of unit traversal time.
They can be applied to graphs with general topology once every ``long'' edge is replaced with a path consisting of edges of the unit length passing through fresh auxiliary vertices.
We will apply this modification to graphs when necessary, without further notice.

In the first experiment, we compare the efficiency of \Regstar against currently the best \baseline \citep{DBLP:conf/atal/KlaskaKLR18}.
In the second experiment, we demonstrate the capability of \Regstar on a real-world patrolling problem that is far beyond the
limits of \baseline. In the last experiment, we examine the impact of available memory size on the protection achieved by \Regstar.

\begin{figure}[t]
	\centering
	\includegraphics[width=\linewidth]{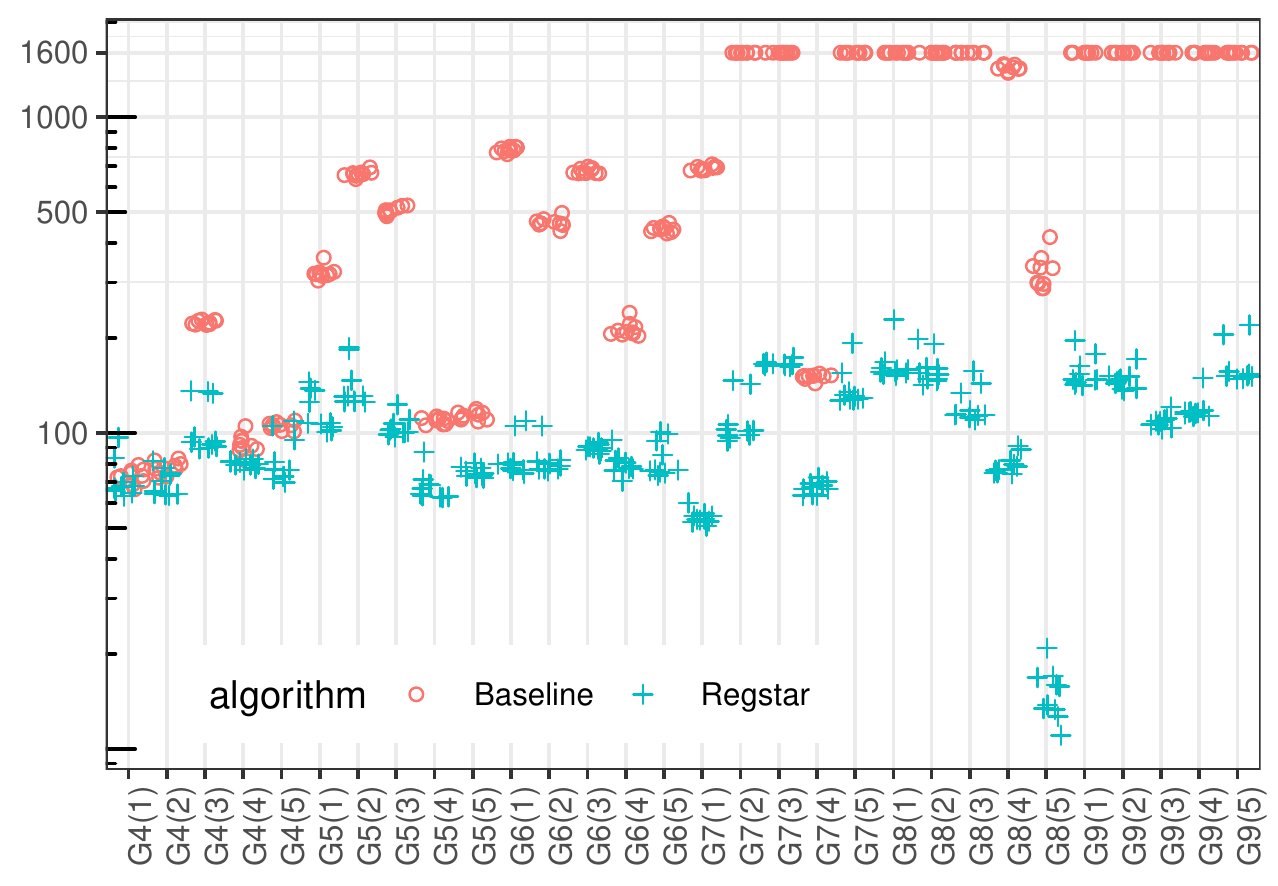}
\caption{Comparison of \Regstar against \baseline. The time (in secs., logscale) needed
to finish 50 optimization runs on various graphs is shown. The timeout is~1600~s.}
\label{fig-exp1}
\end{figure}

\subsection{Comparison to Baseline}
\label{sec-baseline}

We consider patrolling graphs where ten targets are selected randomly from $n \times n$ grid.
The traversal time between two vertices corresponds to their $L^1$ distance\footnote{We use the $L^1$ (taxicab, Manhattan) distance instead of Euclidean distance because the former is considered as a better approximation of commuting distance.}. 
For each $n = 4,\ldots,9$, we randomly select five graphs $G_n(1),\ldots,G_n(5)$.
Thus, we obtain a collection of 30 patrolling graphs.
Note that these graphs tend to contain longer edges with the increased~$n$.
The targets' cost is selected randomly between 180 and 200, and intrusion detection is perfect.
The time needed to complete an attack is the same for all targets in $G_n(i)$, and it is set to a value for which the Defender can achieve a reasonably high protection ($d(\tau)=\tm_{\max} +\tm_{\text{avg}}+3$, where $\tm_{\max}$ and $\tm_{\text{avg}}$ is the maximal and the average traversal time of an edge).
For each $G_n(i)$, we report the total running time \Regstar and \baseline need to improve the same set of 50 randomly generated initial regular strategies.
The timeout was set to 1600~s.
This is repeated 10 times for every $G_n(i)$, outcoming 20 accumulated times shown in \fig{fig-exp1}.
Note that the time scale is \emph{logarithmic}.

Observe that \Regstar terminates in about 100--200 seconds in all cases while \baseline reaches the timeout even for some $G_7(i)$ graphs and for \emph{all} $G_9(i)$ graphs.
This demonstrates that \Regstar outperforms the \baseline.
When both algorithms terminate, the achieved protection values are about the same.

\subsection{Patrolling an ATM network}
\label{sec-Montreal}

We examine a patrolling graph where the vertices correspond to selected ATMs in Montreal (\fig{F:atms}).
All parameters of the graph are chosen in the same way as in \sect{sec-baseline}, except for $d(\tau)=2*\tm_{\max} +\tm_{\text{avg}}$ and imperfect intrusion detection, which is set randomly to a value between~0.8~and~1.

Note that the corresponding graph adjustment needed to run the \baseline results in more than 
3 thousand auxiliary vertices, far beyond its limits. \citet{DBLP:conf/atal/KlaskaKLR18} claim that \baseline can solve instances with about 100 vertices. 

The results achieved by \Regstar are summarized in \tab{tab-experiment2}. 
For $m = 1,\ldots,4$ we randomly generate 100 initial regular strategies where every vertex is assigned~$m$ memory elements.
We report the best and the average protection value achieved by \Regstar, the percentage of runs for which the resulting value reached at least $90\%$ of the best value (labeled ``close''), and the average number of iterations and time needed by one \Regstar run.

Note that even for $m=4$, the algorithm can improve one strategy in about~20 mins.
The number of runs for which the optimization produces a high-value strategy is consistently very high and increases with $m$.


\begin{table}[tbp]
	\centering\small
	\caption{Analysis of \Regstar on Montreal's ATMs.}
	\begin{adjustbox}{max width=\linewidth}
	\begin{tabular}{ccccr@{\ }lr@{\ }l}
		\toprule
			$m$ & $\RVal_\text{best}$ & $\RVal_\text{avg}$ & close (\%) & \multicolumn{2}{c}{iter} & \multicolumn{2}{c}{time (s)} \\
			\midrule
			1 & 64 & $57\pm3$ & 46 & 280&$\pm$ 30  & 5&$\pm$ 1 \\
			2 & 75 & $70\pm2$ & 83 & 684&$\pm$ 41  & 79&$\pm$ 8 \\
			3 & 80 & $77\pm2$ & 100 & 1045&$\pm$ 60 & 360&$\pm$ 58 \\
			4 & 81 & $79\pm1$ & 100 & 1346&$\pm$ 75 & 1250&$\pm$ 196 \\ 
			\bottomrule
	\end{tabular}
	\end{adjustbox}
	\label{tab-experiment2}
\end{table}

\subsection{Patrolling an office building}

\begin{table*}[t]
	\centering
	\small
	\caption{Analysis of \Regstar on office buildings.}
	\begin{tabular}{c|ccr@{\ }l|ccr@{\ }l|ccr@{\ }l}
		\toprule
  \multirow{2}{*}{$m$} & 
  \multicolumn{4}{c|}{One floor}&
  \multicolumn{4}{c|}{Two floors}&
  \multicolumn{4}{c}{Three floors}\\
			    & $\RVal_\text{best}$ & close (\%) & \multicolumn{2}{c|}{time (s)} 
                & $\RVal_\text{best}$ & close (\%) & \multicolumn{2}{c|}{time (s)}
                & $\RVal_\text{best}$ & close (\%) & \multicolumn{2}{c}{time (s)}\\
			\midrule
			1
        & 27 & \phantom{0}6  & $0.02$ & $\pm$ 0.01
        & 32 & 29 & $0.6$ & $\pm$ $ 0.5$
        & 29 & 3\phantom{.0}  &  $2$ & $\pm$ $1$\\ 
			2
        & 41 & 28 & $1.2$ &$ \pm$ $ 0.5$   
        & 37 & 35 & $11$ &$ \pm$ $ 8$
        & 32 & 5\phantom{.0}  & $10$ & $ \pm$ $16$\\ 
			3
        & 44 & 84 & $4.8$ &$\pm$ $2.1$ 
        & 45 & 30 & $47$  & $\pm$ $44$
        & 44 & 0.5 & $78$ &$ \pm$ $142$\\ 
			4
        & 47 & 83 & $11.7$ &$ \pm$ $5.8$
        & 53 & \phantom{0}4  & $102$ &$ \pm$ $ 118$ 
        & 44 & 3\phantom{.0}  & $238$ &$ \pm$ $ 469$\\ 
			5
        & 47 & 75 & $24.7$ &$ \pm$ $ 14.2$
        & 56 & 12 & $236$ &$ \pm$ $ 283$ 
        & 50 & 0.5 & $286$ &$ \pm$ $ 635$\\ 
			6
        & 47 & 65 & $45.2$ &$\pm$ $ 32.8$
        & 57 & 11 & $268$ &$ \pm$ $ 420$
        & 55 & 0.5 & $641$ &$  \pm$ $ 1557$\\ 
			7
        & 51 & 45 & $74.8$ &$ \pm$ $ 64.8$ 
        & 58 & 18 & $543$ &$ \pm$ $ 877$ 
        & 53 & 1\phantom{.0}  & $1278$ &$  \pm$ $ 3414$\\ 
			8
        & 52 & 22 & $112.0$ &$ \pm$ $ 111.8$
        & 59 & \phantom{0}9  & $515$ &$ \pm$ $ 1116$
        & 56 & 1\phantom{.0}  & $1486$ &$  \pm$ $ 4466$\\ 
			\bottomrule
	\end{tabular}
	\label{tab-experiment3}
\end{table*}
We consider three office buildings with one, two, and three floors.
On each floor, there are ten offices alongside a corridor.
Stairs connect the floors on both sides of the corridors. 

\label{sec-buildings}

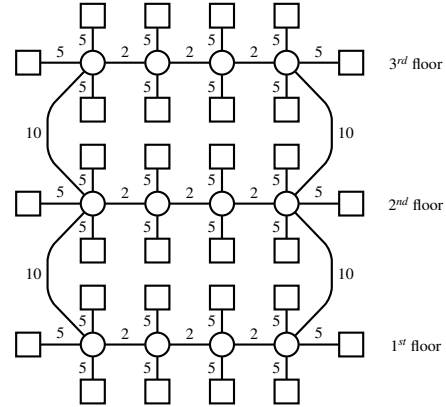
\begin{figure}[tb]
\centering
\begin{tikzpicture}[scale=.5, every node/.style={scale=0.6}, x=1.7cm, y=1.25cm]
	\foreach \x in {1,2,3,4}{
        \foreach \y in {0,2,3,5,6,8}{ 
			\node [max] (r\x\y) at (\x,\y)  {};
        };
        \foreach \d/\y/\u in {0/1/2,3/4/5,6/7/8}{ 
			\node [stoch] (c\x\y) at (\x,\y)  {};
            \draw [tran,-] (c\x\y) -- node[left] {$5$} (r\x\u);
            \draw [tran,-] (c\x\y) -- node[left] {$5$} (r\x\d);
        };
    };
    \foreach \y/\f in {1/{$1^{\textit{st}}$},4/{$2^{\textit{nd}}$},7/{$3^{\textit{rd}}$}}{
        \node [max] (r0\y) at (0,\y)  {};
        \node [max] (r5\y) at (5,\y)  {};
        \draw [tran,-] (r0\y) -- node[above] {$5$} (c1\y);
        \draw [tran,-] (r5\y) -- node[above] {$5$} (c4\y);
        \node at (6,\y) {\f \textrm{ floor}};
        \foreach \x/\r in {1/2,2/3,3/4}{
           \draw [tran,-] (c\x\y) -- node[above] {$2$} (c\r\y);  
        }
    }
    \draw [tran,-,rounded corners] (c11) -- +(-.7,1) -- node[left] {10} +(-.7,2) -- (c14);    
    \draw [tran,-,rounded corners] (c14) -- +(-.7,1) -- node[left] {10} +(-.7,2) -- (c17);    
    \draw [tran,-,rounded corners] (c41) -- +(.7,1) -- node[right] {10} +(.7,2) -- (c44);    
    \draw [tran,-,rounded corners] (c44) -- +(.7,1) -- node[right] {10} +(.7,2) -- (c47);   
\end{tikzpicture}
\caption{A building with three floors connected by stairs.}
\label{fig-building}
\end{figure}

The building with three floors is shown in \fig{fig-building}.
The squares represent the offices, and the circles represent the corridor locations where the Defender may decide to visit the neighbouring offices.
The ``long'' edges represent stairs.
Every office's cost is set to 100, and the probability of successful intrusion detection is 0.9.
The time needed to complete an intrusion is set to 100, 200, and 300 for the building with one, two, and three floors, respectively. 

The outcomes achieved by \Regstar are summarized in \tab{tab-experiment3}.
For every building and every \mbox{$m\in\{1,\ldots,8\}$}, 200 runs were processed with
all nodes having $m$ memory elements.
We report the best value found, the percentage of runs for which the resulting value reached at least $90\%$ of the best value (labeled ``close''), and the average runtime.

We observe that the achieved protection substantially increases with more memory elements.
This is because the considered patrolling graphs are relatively sparse, and ``remembering'' the history of previously visited vertices is inevitable for arranging optimal moves.
In contrast, the patrolling graphs of ATM networks presented in \sect{sec-Montreal} are fully connected, and the extra memory elements have not brought many advantages.

The \Regstar algorithm discovers the relevant information about the history fully automatically.
To examine this capability, we again consider the building with one floor, but we change the probability of successful intrusion detection to~1 and increase the attack time to~112 units.
This is \emph{precisely} the time the Defender needs to visit every target before an arbitrary attack is completed to achieve a \emph{perfect} protection equal to 100.
The Defender must schedule his walk to visit every office precisely once and with appropriate timing.

This experiment (see \tab{tab-experiment4}) show that \Regstar can indeed discover this ``clever'' walk.
For every $m\in\{1,\ldots,8\}$, 500 runs were processed.
The third column shows the percentage of runs resulting in a perfect protection strategy.
Observe that \emph{four} memory elements are sufficient to achieve perfect protection, and the chance of discovering a perfect strategy further increases with more memory elements.

\begin{table}[t]
	\centering
	\footnotesize
	\caption{Reaching perfect protection for a one-floor office building with tight attack time.}
	\begin{tabular}{cccr@{\ }l}
		\toprule
		 $m$ & $\RVal_\text{best}$ & close (\%) & \multicolumn{2}{c}{time (s)} \\
			\midrule
			1 & \phantom{0}34  & \phantom{11.}0    & 0.05 &$\pm$ 0.03   \\ 
			2 & \phantom{0}50  & \phantom{11.}0    & 2.1  &$\pm$ 1.0   \\ 
			3 & \phantom{0}55  & \phantom{11.}0    & 8.6  &$\pm$ 4.8   \\ 
			4 & 100 & \phantom{1}1.2  & 20   &$\pm$ 13   \\ 
			5 & 100 & \phantom{1}5.0  & 37   &$\pm$ 31   \\ 
			6 & 100 & 12.2 & 73   &$\pm$ 67   \\ 
			7 & 100 & 11.8 & 107  &$\pm$ 122   \\ 
			8 & 100 & \phantom{1}8.6  & 119  &$\pm$ 180   \\ 
			\bottomrule
	\end{tabular}
	\label{tab-experiment4}
\end{table}

\section{Conclusion}
\label{sec-concl}

Our efficient and differentiable regular-strategy evaluation algorithm proved to apply to patrolling graphs with arbitrary edge lengths and imperfect intrusion detection.
The experimental results are encouraging and indicate that high-quality Defender's strategies can be constructed by optimization methods in a reasonable time for real-world 
scenarios.

Our experiments also show that the current optimization methods do not often converge to the best possible strategy.
This suggests that an extra performance could be gained in exploration and improvement on the optimization side.

In Theorem~\ref{thm-reg}, we proved that regular strategies \emph{approximate} the optimal strategy to arbitrary precision $\varepsilon>0$. The needed memory size depends on $\varepsilon$. We evaluated various memory sizes $m$ experimentally from $m=1$ up to the highest computable values in \sect{sec-Montreal} and \sect{sec-buildings}. Finding the best $m$ (or even proving that regular strategies achieve the optimality) is a challenging open problem. 

\paragraph*{Acknowledgment}

Research was sponsored by the Army Research Office and was accomplished under
Grant Number W911NF-21-1-0189.
This work was supported from Operational Programme Research, Development and
Education - Project Postdoc2MUNI (No.\ CZ.02.2.69/0.0/0.0/18\_053/0016952).

\paragraph*{Disclaimer}

The views and conclusions contained in this document are those of the authors
and should not be interpreted as representing the official policies, either
expressed or implied, of the Army Research Office or the U.S.\ Government. The
U.S.\ Government is authorized to reproduce and distribute reprints for
Government purposes notwithstanding any copyright notation herein.

\clearpage

\appendix

\section{Proof of Theorem~1}
\label{app-regular}

In this section, we show that
Theorem~1 holds even for a special type of
\emph{deterministic-update} regular strategies where, for all $(v,m) \in
\dhat{V}$ and $v' \in V$, there is at most one $m'$ such that
$\sigma((v,m),(v',m')) > 0$.

For the rest of this section, we fix a patrolling graph $G =
(V,T,E,\tm,d,\alpha,\beta)$. We say that a strategy $\gamma$ is \emph{optimal}
if $\Val_G(\gamma) = \Val_G$. The existence of optimal strategies in patrolling
games has been proven in \cite{BHKRA:Patrol-strategy-arxiv}.

For a strategy $\gamma$ and $o = v_1,\ldots, v_n$, $v_n {\rightarrow} v_{n+1}
\in \Omega$, let $\gamma[o]$ be a strategy that starts in $v_n$ by selecting
the edge $v_n {\rightarrow} v_{n+1}$ with probability one, and for every finite
path of the form $v_n,v_{n+1},\ldots,v_{n+k+1}$ where $k\ge 1$ we have that
$\gamma[o](v_n,v_{n+1},\ldots,v_{n+k+1}) = \gamma(v_1,\ldots,v_{n+k+1})$. 

First, we need the following lemma.

\begin{lemma} \label{lem-aux}
Let $\gamma$ be an optimal strategy and $o \in \Omega$ such that
$\Prob^\gamma(o) > 0$. Then $\gamma[o]$ is optimal. 
\end{lemma}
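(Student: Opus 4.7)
The plan is to argue by contradiction: suppose $\Val_G(\gamma[o]) < \Val_G$, and construct a Defender strategy $\gamma^\star$ with $\Val_G(\gamma^\star) > \Val_G(\gamma) = \Val_G$, contradicting the optimality of $\gamma$. The intuition is to \emph{graft} a strictly better continuation onto $\gamma$ at the switching history $v_1, \ldots, v_n, v_{n+1}$, while leaving $\gamma$'s behaviour elsewhere unchanged.

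By the hypothesis $\Val_G(\gamma[o]) < \Val_G$ and the definition of $\Val_G$ as a supremum of strategy values, there exists a Defender strategy $\gamma''$ with the same forced initial segment as $\gamma[o]$ (starting at $v_n$ and taking the edge $v_n \rightarrow v_{n+1}$ deterministically) such that $\Val_G(\gamma'') > \Val_G(\gamma[o])$. Existence of $\gamma''$ follows from the existence of optimal strategies \citep{BHKRA:Patrol-strategy-arxiv} applied to the constrained subgame. I then define $\gamma^\star$ by setting $\gamma^\star(h) = \gamma(h)$ on every history $h$ that does not strictly extend the prefix $v_1, \ldots, v_n, v_{n+1}$, and $\gamma^\star(v_1, \ldots, v_n, v_{n+1}, w) = \gamma''(v_n, v_{n+1}, w)$ on every such extension.

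To bound $\Val_G(\gamma^\star)$ from below, I would fix an arbitrary Attacker's strategy $\pi$ and decompose $\EU_A(\gamma^\star, \pi)$ along the event $E$ that the Defender's trajectory passes through the switching history (which has probability $\Prob^\gamma(o)$ under both $\gamma$ and $\gamma^\star$). On $E^c$ the two Defender strategies coincide pathwise, so the contribution of $E^c$ to $\EU_A(\gamma^\star, \pi)$ equals its counterpart under $\gamma$. On $E$ the restriction of $\pi$ to observations extending $o$ defines an Attacker strategy in the subgame governed by $\gamma''$, so its expected Attacker utility is bounded above by $\alpha_{\max} - \Val_G(\gamma'')$, to be compared with the analogous bound $\alpha_{\max} - \Val_G(\gamma[o])$ for the game governed by $\gamma[o]$. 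Taking the supremum of $\EU_A$ over $\pi$ and exploiting the at-most-one-attack constraint (so that the ``pre-switch'' and ``post-switch'' portions of $\pi$ can be optimised essentially independently) would yield $\sup_\pi \EU_A(\gamma^\star, \pi) \leq \sup_\pi \EU_A(\gamma, \pi) - \Prob^\gamma(o)\bigl(\Val_G(\gamma'') - \Val_G(\gamma[o])\bigr)$, i.e.\ $\Val_G(\gamma^\star) \geq \Val_G(\gamma) + \Prob^\gamma(o)\bigl(\Val_G(\gamma'') - \Val_G(\gamma[o])\bigr) > \Val_G$, the desired contradiction.

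The hard part will be executing the decomposition rigorously for Attacker observations $o'$ whose history is a strict prefix of $v_1, \ldots, v_n, v_{n+1}$. For such $o'$ the set $\Path(o', \tau)$ may contain Defender paths that cross the switching history, so $\Protect^{\gamma^\star}(\tau \mid o')$ depends on both $\gamma$'s behaviour before the switch and $\gamma''$'s behaviour after; one must carefully split each such path into its pre-switch and crossing segments and track the probability contributions under $\gamma$ and $\gamma''$ separately. A related subtlety is certifying that a single Attacker strategy $\pi$ can simultaneously realise the bounds on both parts of the decomposition: this relies on the at-most-one-attack constraint decoupling the Attacker's choices on $E$ from those on $E^c$, together with the fact that $\gamma^\star$ and $\gamma$ are indistinguishable before the switch, so the pre-switch observations carry the same information under both strategies and an Attacker's near-optimal response can be assembled piecewise.
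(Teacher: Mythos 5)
Your grafting construction and the decomposition along the event $E$ are reasonable as far as they go, but the key inequality
\begin{equation*}
\sup_\pi \EU_A(\gamma^\star,\pi) \;\le\; \sup_\pi \EU_A(\gamma,\pi) \,-\, \Prob^\gamma(o)\bigl(\Val_G(\gamma'')-\Val_G(\gamma[o])\bigr)
\end{equation*}
is false, and with it the contradiction collapses. For a \emph{fixed} $\pi$ you can bound the contribution of $E$ to $\EU_A(\gamma^\star,\pi)$ by $\Prob^\gamma(o)\cdot(\alpha_{\max}-\Val_G(\gamma''))$, but the analogous quantity $\Prob^\gamma(o)\cdot(\alpha_{\max}-\Val_G(\gamma[o]))$ is only an \emph{upper bound} on the contribution of $E$ to $\EU_A(\gamma,\pi)$: nothing forces a worst-case Attacker against $\gamma$ to extract that much --- or anything at all --- from the branch through $o$. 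Concretely, take $\pi$ that attacks immediately at time zero at a target realizing the infimum in $\Val_G(\gamma)$ up to $\varepsilon$; such a $\pi$ places no mass on $E$, so (for $o$ long enough that the switch cannot affect the protection of this early attack) $\EU_A(\gamma^\star,\pi)=\EU_A(\gamma,\pi)$, whence $\sup_\pi\EU_A(\gamma^\star,\pi)\ge \alpha_{\max}-\Val_G-\varepsilon$ and no strict improvement occurs. A quicker sanity check: your conclusion $\Val_G(\gamma^\star)>\Val_G$ asserts a strategy whose value exceeds the supremum $\Val_G=\sup_{\gamma'}\Val_G(\gamma')$, which is impossible, so some step must fail. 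The moral is that improving a strategy on one branch does not improve a worst-case (infimum-over-Attacker) value unless that branch is where the infimum is approximately attained.

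The paper argues in the opposite direction and never modifies $\gamma$. Fix $\ell$ and let $\Omega(\ell)$ be the set of observations of length $\ell$ with $\Prob^\gamma>0$. Trivially $\sum_{o'\in\Omega(\ell)}\Prob^\gamma(o')\,\Val_G(\gamma[o'])\le\Val_G$, since every term satisfies $\Val_G(\gamma[o'])\le\Val_G$. For the reverse inequality, one assembles, from $\varepsilon$-optimal Attacker responses $\pi_{o'}$ to each continuation $\gamma[o']$, a single Attacker $\hat\pi$ that waits for $\ell$ moves and then switches to the appropriate $\pi_{o'}$; this gives $\Val_G=\Val_G(\gamma)\le\EU_D(\gamma,\hat\pi)\le\varepsilon+\sum_{o'}\Prob^\gamma(o')\,\Val_G(\gamma[o'])$. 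Equality of the average with its upper bound forces $\Val_G(\gamma[o'])=\Val_G$ for every $o'$ with positive probability. If you prefer a contradiction, the same averaging argument delivers it directly: $\Val_G(\gamma[o])<\Val_G$ would make the weighted average, and hence $\Val_G(\gamma)$, strictly smaller than $\Val_G$. The essential idea your proposal is missing is this ``wait-then-switch'' Attacker, which converts suboptimality of a continuation into suboptimality of $\gamma$ itself without any need to exhibit a better Defender strategy.
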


\begin{proof}
For every $\ell \ge 1$, let $\Omega(\ell)$ be the set of all observations $o =
v_1,\ldots,v_\ell,v_\ell{\rightarrow}v_{\ell+1}$  such that $\Prob^{\gamma}(o)
> 0$. Clearly, for every fixed $\ell \ge 1$ we have that
\begin{align*}
	\sum_{o \in \Omega(\ell)} \Prob^{\gamma}(o) \cdot \Val_G(\gamma[o])
		\le \Val_{G}
\end{align*}
because $\Val_G(\gamma[o]) \le \Val_G$ and $\sum_{o \in \Omega(\ell)}
\Prob^{\gamma}(o) = 1$. We show that 
\begin{equation} \label{neq-val}
	\Val_{G}
		\le \sum_{o \in \Omega(\ell)} \Prob^{\gamma}(o) \cdot \Val_G(\gamma[o]) 
\end{equation} 
which implies $\Val_{G} = \sum_{o \in \Omega(\ell)} \Prob^{\gamma}(o) \cdot
\Val_G(\gamma[o])$, and hence $\Val_G(\gamma[o]) = \Val_{G}$ for every $o \in
\Omega(\ell)$. 

It remains to prove~\eqref{neq-val}. Since $\Val_{G}  = \Val_{G}(\gamma)$, it
suffices to show that, for an arbitrarily small $\varepsilon > 0$, 
\begin{align*}
	\Val_{G}(\gamma)
		\le \varepsilon + \sum_{o \in \Omega(\ell)}
			\Prob^{\gamma}(o) \cdot \Val_G(\gamma[o]) \,. 
\end{align*}
For every $o \in \Omega(\ell)$, let $\pi_o$ be an Attacker's strategy such that
$\EU_D(\gamma[o],\pi_{o}) \le \Val_G(\gamma[o])+ \varepsilon$. Consider
another Attacker's strategy $\hat{\pi}$ waiting for the first $\ell$ moves and
then ``switching'' to an appropriate $\pi_o$ according to the corresponding
observation. Then,
\begin{align*}
  \Val_{G}(\gamma)
		& \le \EU_D(\gamma,\hat{\pi})
			\\
  	& \le \sum_{o \in \Omega(\ell)} \Prob^{\gamma}(o) \cdot \EU_D(\gamma[o],\pi_{o})
			\\
  	& \le \sum_{o \in \Omega(\ell)} \Prob^{\gamma}(o) \cdot (\Val_{G}(\gamma[o]) + \varepsilon)
			\\
  	& = \varepsilon + \sum_{o \in \Omega(\ell)} \Prob^{\gamma}(o) \cdot \Val_{G}(\gamma[o]) \,.
\end{align*}
This completes the proof of Lemma~\ref{lem-aux}.
\end{proof}

\begin{proof}[Proof of Theorem~1]
We show that for an arbitrarily small $\varepsilon > 0$, there exists a
deterministic-update regular strategy $\sigma_\varepsilon$ such that
$\Val_{G}(\sigma_\varepsilon) \ge \Val_G - \varepsilon$.
	
Let $\dm = \max_{t\in T} d(t)$, and $\am = \max_{t\in T} \alpha(r)$ and let an
optimal Defender's strategy $\gamma$ be fixed.
	
We say that two non-empty finite paths $h,h' \in \hist$ are
\emph{$\delta$-similar}, where $\delta > 0$, if the following conditions are
satisfied:
\begin{itemize}
	\item $h$ and $h'$ end with the same vertex $v$,
	\item $\Prob^{\gamma}(h) > 0$, $\Prob^{\gamma}(h') > 0$,
	\item for every target $t$ and every $\ell \in \{1,\ldots,\dm\}$, the
	probabilities that $\gamma$ successfully detects an ongoing attack at $t$ in
	at most $\ell$ time units after executing the histories $h$ and $h'$ differ
	at most by $\delta$. 
\end{itemize}
Note that there are only \emph{finitely many} pairwise non-$\delta$-similar
histories. More precisely, their total number is bounded from above by
$|V|\cdot\left(\lceil \delta^{-1} \rceil\right)^{\dm \cdot |T|}$.
    
Let us fix an arbitrarily small $\varepsilon > 0$, and let $\delta =
\varepsilon/\am$. Furthermore, let $\kappa = |V|\cdot\left(\lceil \delta^{-1}
\rceil\right)^{\dm \cdot |T|}$. We construct a regular deterministic-update
strategy $\sigma_\varepsilon$ as follows:
\begin{itemize}
	\item Let $H_\delta$ be the set of all finite paths $h$ of length at most
	$\kappa \cdot \dm$ such that $\Prob^{\gamma}(h)>0$ and for all proper
	prefixes $h',h''$ of $h$ whose length is a multiple of $\dm$ we have that if
	$h',h''$ are $\delta$-similar, then $h' = h''$.  For notation simplification,
	from now on we identify memory elements with such finite paths.
	\item For every eligible pair $(v,h)$, the distribution
	$\sigma_{\varepsilon}(v,h)$ is determined in the following way:
	\begin{itemize}
		\item If the length of $h$ is a multiple of $\dm$ and there is a proper
		prefix $h'$ of $h$ where the length of $h'$ is also a multiple of $\dm$ and
		the histories $h,h'$ are $\delta$-similar, then $\sigma_{\varepsilon}(v,h)
		= \sigma_{\varepsilon}(v,h')$ (since $h'$ is shorter than $h$, we may
		assume that $\sigma_{\varepsilon}(v,h')$ has already been defined).
		\item Otherwise, $\sigma_{\varepsilon}(v,h)$ is a distribution $\mu \in
		\Dist(V {\times} H_\delta)$ such that $\mu(v',hv') =
		\Prob^{\gamma}(hv')/\Prob^{\gamma}(h)$ for every vertex $v'$ such that $hv'
		\in H_\delta$. For the other pairs of $V \times H_\delta$, the distribution
		$\mu$ returns zero.  
	\end{itemize}
	\item The initial distribution assigns $\gamma(\lambda)(v)$ to every $(v,v)
	\in V \times H_\delta$. For the other pairs of $V \times H_\delta$, the
	initial distribution returns zero.
\end{itemize}
Intuitively, the strategy $\sigma_\varepsilon$ mimics the optimal strategy
$\gamma$, but at appropriate moments ``cuts'' the length of the history stored
in its memory and starts to behave like $\gamma$ for this shorter history.
These intermediate ``switches'' may lower the overall protection, but since the
shorter history is $\delta$-similar to the original one, the impact of these
``switches'' is very small. 
    
More precisely, we show that, for an arbitrary Attacker's strategy $\pi$,
$\EU_D(\sigma_{\varepsilon},\pi) \ge \EU_D(\gamma,\pi) - \varepsilon$. Since
$\gamma$ is optimal, we obtain $\EU_D(\sigma_{\varepsilon},\pi) \ge \Val_{G} -
\varepsilon$, hence $\Val_{G}(\sigma_\varepsilon) \ge \Val_G - \varepsilon$ as
required. For the rest of this proof, we fix an Attacker's strategy $\pi$.  For
every target $\tau$, let $\pi_\tau$ be an Attacker's strategy such that
$\pi_\tau(u{\rightarrow}v) = \attack_\tau$ for every edge $u{\rightarrow}v$,
\ie, $\pi_\tau$ attacks~$\tau$ immediately.  Furthermore, let
$\Attack(\pi,\tau)$ be the set of all observations $o$ such that
$\Prob^{\sigma_\varepsilon}(o) > 0$ and $\pi(o) = \attack_\tau$. We have the
following: 
\begin{align*}
	& \EU_A(\sigma_{\varepsilon},\pi)
			\\
		&\quad = \sum_{\tau \in T} \sum_{o \in \Attack(\pi,\tau)} \Prob^{\sigma_\varepsilon}(o)
					\cdot (\alpha(\tau)-\Protect^{\sigma_\varepsilon}(\tau\mid o))
				\\
		&\quad = \sum_{\tau \in T} \sum_{o \in \Attack(\pi,\tau)}\Prob^{\sigma_\varepsilon}(o)
					\cdot \EU_A(\sigma_\varepsilon[o],\pi_\tau)
\end{align*}
Here, $\sigma_\varepsilon[o]$, where $o = v_1,\ldots, v_n$, $v_n {\rightarrow}
v_{n+1}$, is a strategy that starts in $v_n$ by executing the edge $v_n
{\rightarrow} v_{n+1}$, and then behaves identically as $\sigma_\varepsilon$
after the history $o$ (since $\sigma_\varepsilon$ is deterministic-update, the
associated memory elements are determined uniquely by $o$). 

Now, realize that for every $o \in \Attack(\pi,\tau)$, there exists an
observation $o'$ (stored in the finite memory of~$\sigma_\varepsilon$) such
that $\Prob^{\gamma}(o')>0$ and the strategy $\sigma_\varepsilon[o]$
``mimics'' the strategy $\gamma[o']$ until the finite path stored in the memory
of $\sigma_\varepsilon$ is ``cut'' into a shorter path in the way described
above. Since at most one such ``cut'' is performed during the first $\dm$ steps
and the shorter path obtained by the cut is $\delta$-similar to the original
one, we obtain that the difference between
$\EU_A(\sigma_\varepsilon[o],\pi_\tau)$ and $\EU_A(\gamma[o'],\pi_\tau)$ is at
most~$\varepsilon$.

By Lemma~\ref{lem-aux}, we obtain $\EU_D(\gamma[o'],\pi_\tau) \ge \Val_G$,
hence $\EU_A(\gamma[o'],\pi_\tau) \le \am-\Val_G$ and
$\EU_A(\sigma_\varepsilon[o],\pi_\tau) \le \am-\Val_G+\varepsilon$.  This
gives
\begin{align*}
	& \EU_A(\sigma_{\varepsilon},\pi)
			\\
		& \quad\le \sum_{\tau \in T} \sum_{o \in \Attack(\pi,\tau)}\Prob^{\sigma_\varepsilon}(o)
			\cdot (\am-\Val_G+\varepsilon)
				\\
		& \quad= (\am-\Val_G+\varepsilon)\cdot \sum_{\tau \in T}
				\sum_{o \in \Attack(\pi,\tau)}\Prob^{\sigma_\varepsilon}(o)
				\\
		& \quad\le \am - \Val_G + \varepsilon
\end{align*}
since the sum is equal to the probability that $\pi$ attacks at all against
$\sigma_\varepsilon$, which is at most $1$.  Hence,
$\EU_D(\sigma_{\varepsilon},\pi) \ge \Val_G - \varepsilon$ and we are done. 
\end{proof}

\end{document}